\documentclass[conference]{IEEEtran}
\IEEEoverridecommandlockouts

\pdfoutput=1

\usepackage{amsfonts}
\usepackage{amsmath}
\usepackage{amsthm}
\usepackage{amssymb}
\usepackage{mathtools}
\usepackage{commath}
\usepackage{mathrsfs}
\usepackage{mathdots}

\usepackage[sort,comma, numbers]{natbib}

\usepackage{microtype}              

\usepackage{hyperref}
\hypersetup{colorlinks=false}
\usepackage{cleveref}

\usepackage{siunitx}

\newtheorem{proposition}{Proposition}

\usepackage{booktabs}

\def\BibTeX{{\rm B\kern-.05em{\sc i\kern-.025em b}\kern-.08em
    T\kern-.1667em\lower.7ex\hbox{E}\kern-.125emX}}
\begin{document}

\title{%
    Numerically robust Gaussian state estimation \\ 
    with singular observation noise
    \thanks{%
        NK was supported by a research grant (42062) from VILLUM FONDEN, partly funded by the Novo Nordisk Foundation through the Center for Basic Machine Learning Research in Life Science (NNF20OC0062606), and received funding from the European Research Council (ERC) under the European Union's Horizon programme (grant agreement 101125993).
        FT was partially supported by the Wallenberg AI, Autonomous Systems and Software Program (WASP) funded by the Knut and Alice Wallenberg Foundation. 

    }
}

\author{\IEEEauthorblockN{Nicholas Krämer}
\IEEEauthorblockA{\textit{Department of Applied Mathematics and Computer Science} \\
\textit{Technical University of Denmark}\\
Kongens Lyngby, Denmark \\
pekra@dtu.dk}
\and
\IEEEauthorblockN{Filip Tronarp}
\IEEEauthorblockA{\textit{Centre for Mathematical Sciences} \\
\textit{Lund University}\\
Lund, Sweden \\
filip.tronarp@matstat.lu.se}
}

\maketitle

\begin{abstract}
This article proposes numerically robust algorithms for Gaussian state estimation with singular observation noise.
Our approach combines a series of basis changes with Bayes' rule,
transforming the singular estimation problem into a nonsingular one with reduced state dimension.
In addition to ensuring low runtime and numerical stability,
our proposal facilitates marginal-likelihood computations and Gauss--Markov representations of the posterior process.
We analyse the proposed method's computational savings and numerical robustness and validate our findings in a series of simulations.
\end{abstract}

\begin{IEEEkeywords}
Gaussian state estimation, singular covariance, model reduction, numerical stability
\end{IEEEkeywords}

\section{Introduction}
\label{sec:introduction}
Let $\{u_t\}_{t=0}^T \subseteq \mathbb{R}^n$ and $\{w_t\}_{t=0}^T \subseteq \mathbb{R}^r$ be pairwise independent Gaussian variables and consider the state-space model ($x_{-1}\coloneqq 0$)
\begin{subequations} \label{equation-full-ssm}
\begin{align}
x_t &= \Phi_t x_{t-1} + Q_t u_t, & t&=0, ..., T\\ 
y_t &= C_t x_t + F_t w_t, & t&=0, ..., T
\end{align}
\end{subequations}
with transition parameters $Q_t, \Phi_t \in \mathbb{R}^{n \times n}$, and matrices $C_t \in \mathbb{R}^{m \times n}$ and $F_t \in \mathbb{R}^{(\ell + r) \times r}$
The goals are to infer the state sequence $\{x_t\}_{t=0}^T \subseteq \mathbb{R}^n$ from the observed sequence $\{y_t\}_{t=0}^T \subseteq \mathbb{R}^{\ell + r}$, and to evaluate the marginal likelihood of the observations.
Both problems can be solved by, for instance,
the Kalman filter \citep{Kalman1961} and Rauch--Tung--Striebel smoother \citep{Rauch1965}.
However, when the observation model is singular, which means $\ell > 0$, then parts of the state $\{x_t\}_{t=0}^T$ are fully determined in a subspace of $\mathbb{R}^n$.
Consequently, there are computational savings on the table if the state estimation problem can be reduced to the orthogonal complement of said subspace.

More specifically, we study the state estimation problem under the following assumptions \citep{AitElFquih2011,tse1970optimal,Anderson1979}:
First, assume \mbox{$\ell + r \leq n$} and that the matrices $\{C_t\}_t$ have full rank; otherwise, the system would include redundant observations.
Second, assume the matrices $\{F_t\}_t$ and $\{Q_t\}_t$ have full rank -- however, the rank $r$ of $F_t$ could be zero.

Existing work has approached this problem as follows.
\citet{tse1970optimal} develop a method for state estimation in singular observation models based on linear observer theory but do not consider smoothing.
\citet{ko2007state} examine the problem from the perspective of Kalman filtering theory. However, their discussion is confined to the case of time-invariance.
\citet[Section 11.3]{Anderson1979} and \citet{Ghanbarpourasl2022} discuss the same setting.
\citet{AitElFquih2011} develop smoothing algorithms based on model reduction via singular value decompositions.
None of these works discusses numerically robust algorithms, which replace (numerically fragile) covariance-arithmetic with QR decompositions; more on this later.
This gap was partly filled by \citet{Psiaki1999}, who develops numerically robust filtering and smoothing recursions -- however, only for information form parametrisations.
None of the mentioned articles discuss marginal likelihoods.
\citet{Geng2022} develop a general smoothing algorithm via the batch formulation of the estimation problem but do not identify the subspaces in which the state is fully determined and thus leave dimensionality reduction with corresponding runtime improvements on the table.
For a brief survey on constrained state estimation, refer to \citet{simon2010kalman}.

\subsection{Contributions and outline}
\Cref{table:summary_of_contributions} summarises how our work fills all the abovementioned gaps.
\begin{table}[t]
\caption{Contributions. Information form discussion: \Cref{proposition-backward-kernels}}
\label{table:summary_of_contributions}
\begin{center}
\begin{tabular}{| l | c | c | c | c | c |}
\hline
    & Ours & \citep{AitElFquih2011} & \citep{Psiaki1999} & \citep{tse1970optimal,Ghanbarpourasl2022} & \citep{ko2007state}\\ 
\hline 
Filtering 
    & + & - & + & + & +
    \\
Smoothing 
    & + & + & + & - & -
    \\
Marginal likelihood 
    & + & - & - & - & -
    \\
Numerically robust
    & + & - & + & - & -
    \\
Time-varying models 
    & + & + & + & + & -
    \\
Covariance form
    & + & + & - & + & +
    \\
Information form
    & $\sim$ & - & + & - & -
    \\
\hline 
\end{tabular}
\end{center}
\end{table}
Furthermore, and unlike existing algorithms, ours separates into an ``offline'' (\Cref{sec:model_reduction}) and an ``online'' part (\Cref{sec:gaussian_state_estimation}), which means that significant parts of the computation can be performed before the first observation is encountered. 
Numerical experiments (\Cref{sec:experiments}) corroborate our algorithm's efficiency.
On a side note,
\Cref{sec:gaussian_state_estimation} argues for a formulation of Gaussian state estimation that, while not new \citep{kramer2025numerically,tronarp2022fenrir}, might be underappreciated in the literature.

\subsection{Notation}
Denote the transpose of a matrix $M$ by $M^*$ and
abbreviate $x_{0:t} \coloneqq \{x_0, ..., x_t\}$.
Let $I_k$ be the identity matrix with $k \in \mathbb{N}$ rows and columns. 
$\mathcal{O}(\cdot)$ is the usual ``big-Oh'' complexity.

The QR decomposition factorises $M \in \mathbb{R}^{n \times m}$, $n \geq m$ into the product of an orthogonal matrix $Q$ and an upper triangular matrix $R$: $M = QR$. 
We distinguish the \emph{complete QR decomposition} ($Q$ square, $R$ shaped like $M$) and the \emph{thin QR decomposition} ($Q$ shaped like $M$, $R$ square). 
Similarly, the \emph{complete and thin LQ decompositions} factorise a matrix into the product of a lower-triangular and an orthogonal matrix. We implement them by applying QR decompositions to the transpose of a matrix, followed by transposing the results.
The \emph{complete and thin QL decompositions} decompose $M$ into the product of an orthogonal matrix and a lower triangular matrix. Both can be implemented with QR decompositions: For any $k \in \mathbb{N}$, let $F_k \in \mathbb{R}^{k \times k}$ be a matrix with ones on the antidiagonal and zeros elsewhere. Then, a QR decomposition of $M F_n$ yields
\begin{align}
M &= M F_m F_m = QR F_m = (QF_n) (F_nRF_m).
\end{align}
Due to the structure of $F_k$, the matrix $Q F_n$ is orthogonal, and $F_n R F_m$ is lower triangular; thus, a QL decomposition.

\section{Model reduction (before seeing observations)}
\label{sec:model_reduction}
The essence of implementing state estimation algorithms on models with $n$ state, $r$ noisy, and $\ell$ noise-free dimensions is a reduction to a state-space model with fewer dimensions. 
\Cref{sub:one_step_model} explains this for a one-step model, and \Cref{sub:time_varying_model} applies the one-step result to the system in \Cref{equation-full-ssm}.

\subsection{One-step model}
\label{sub:one_step_model}
Let $\Phi$, $Q$, $C$, and $F$ be versions of $\Phi_t$, $Q_t$, $C_t$, and $F_t$ without time indices.
For pairwise independent random variables $u, z \in \mathbb{R}^n$, and $w \in \mathbb{R}^{r}$, define a state $x \in \mathbb{R}^n$ and an observation $y \in \mathbb{R}^{\ell + r}$ via
\begin{align} \label{equation-to-be-reduced-model}
x = \Phi z + Qu, \quad y = C x + F w.
\end{align}
The model in \Cref{equation-to-be-reduced-model} provides the building blocks for reducing the model in \Cref{equation-full-ssm}. Therefore, reducing  \Cref{equation-to-be-reduced-model} is the main ingredient for rewriting a singular state-estimation task as a lower-dimensional, nonsingular one.
We use a sequence of QL and LQ decompositions to achieve this reduction, which aligns the singular and nonsingular components in the observation $y$ with the corresponding components in the state $x$.

First, a complete QL decomposition of $F$ yields column-orthogonal matrices  $V_\mathsf{u} \in \mathbb{R}^{\ell +r \times r}$ and $V_\mathsf{c} \in \mathbb{R}^{\ell +r \times \ell }$, and a lower-triangular matrix $L_\mathsf{u} \in \mathbb{R}^{r \times r}$ that satisfy
\begin{align}
F = \begin{pmatrix}V_\mathsf{u} & V_\mathsf{c} \end{pmatrix} \begin{pmatrix} L_\mathsf{u} \\ 0 \end{pmatrix}.
\end{align}
The reason for using a QL decomposition instead of a QR decomposition is that if $w$ is a standard Gaussian variable, $L_\mathsf{u}$ is the Cholesky factor of $V_\mathsf{u}^* F w$ if we use QL.

Second, a complete LQ decomposition of $V_\mathsf{c}^* C$ yields column-orthogonal matrices  $W_\mathsf{c} \in \mathbb{R}^{n \times \ell}$ and $W_\mathsf{u} \in \mathbb{R}^{n \times (n-\ell)}$, and a lower-triangular matrix $S_\mathsf{c} \in \mathbb{R}^{\ell \times \ell}$ that satisfy
\begin{align}
V_\mathsf{c}^* C  = \begin{pmatrix} S_\mathsf{c} & 0 \end{pmatrix} \begin{pmatrix} W_\mathsf{c}^* \\ W_\mathsf{u}^*\end{pmatrix}.
\end{align}
Together, these two factorisations realign the components in \Cref{equation-to-be-reduced-model} as follows. 
Introduce
\begin{align} \label{equation-state-transformation}
x^\mathsf{c} \coloneqq  W_\mathsf{c}^*x,
\quad
x^\mathsf{u} \coloneqq  W_\mathsf{u}^*x, 
\quad
y^\mathsf{c} \coloneqq V_\mathsf{c}^*y,
\quad
y^\mathsf{u} \coloneqq V_\mathsf{u}^*y.
\end{align}
The superscripts in $x^\mathsf{c}$, $x^\mathsf{u}$, $y^\mathsf{c}$, and $y^\mathsf{u}$ indicate ``constrained'' and ``unconstrained'' parts of $x$ and $y$. 
\Cref{equation-to-be-reduced-model} becomes
\begin{subequations} \label{equation-state-space-rearranged}
\begin{align}
\begin{pmatrix} x^\mathsf{c} \\ x^\mathsf{u} \end{pmatrix}
&=
\begin{pmatrix} W_\mathsf{c}^* \\ W_\mathsf{u}^*\end{pmatrix}
\Phi z
+
\begin{pmatrix} W_\mathsf{c}^* \\ W_\mathsf{u}^* \end{pmatrix}
Q u 
\label{equation-state-space-rearranged-x}
\\ 
\begin{pmatrix} y^\mathsf{u} \\ y^\mathsf{c} \end{pmatrix}
&=
\begin{pmatrix} V_\mathsf{u}^* C W_\mathsf{c} & V_\mathsf{u}^* C W_\mathsf{u} \\ S_\mathsf{c} & 0 \end{pmatrix} 
\begin{pmatrix} x^\mathsf{c} \\ x^\mathsf{u} \end{pmatrix}
+
\begin{pmatrix} L_\mathsf{u}\\ 0 \end{pmatrix} w.
\end{align}
\end{subequations}
\Cref{equation-state-space-rearranged} shows how $y^\mathsf{c}$ uniquely determines $x^\mathsf{u}$ (and vice versa), since $C$ has full rank, which means $S_\mathsf{c}$ is invertible.

Third, removing the dependence of $x^\mathsf{c}$ on $x^\mathsf{u}$ will allow eliminating $x^\mathsf{c}$ from the model, thereby reducing the dimensionality. 
A complete LQ decomposition yields lower-triangular $Z_\mathsf{c} \in \mathbb{R}^{\ell\times \ell}$, $Z_\mathsf{u} \in \mathbb{R}^{(n-\ell) \times (n-\ell)}$, a dense $Z_\star \in \mathbb{R}^{(n-\ell)\times \ell}$, and column-orthogonal $U_\mathsf{c} \in \mathbb{R}^{n \times n-\ell}$ and $U_\mathsf{u} \in \mathbb{R}^{n \times \ell}$ such that
\begin{align}
\begin{pmatrix}
W_\mathsf{c}^* Q \\ 
W_\mathsf{u}^* Q
\end{pmatrix}
=
\begin{pmatrix}
Z_\mathsf{c} & 0 \\ Z_\star & Z_\mathsf{u}
\end{pmatrix}
\begin{pmatrix}
U_\mathsf{c}^* \\ U_\mathsf{u}^*
\end{pmatrix}
\end{align}
holds.
Let $G \coloneqq L_\star (L_1)^{-1} \in \mathbb{R}^{(n-\ell) \times \ell}$ and observe
\begin{align}\label{equation-secret-conditioning}
\begin{pmatrix}
Z_\mathsf{c} & 0 \\ Z_\star & Z_\mathsf{u}
\end{pmatrix}
=
\begin{pmatrix}
I_{\ell} & 0 \\ G & I_{n-\ell}
\end{pmatrix}
\begin{pmatrix}
Z_\mathsf{c} & 0 \\ 0 & Z_\mathsf{u}
\end{pmatrix}.
\end{align}
\Cref{equation-secret-conditioning} implements Bayes' rule because it factorises $p(x^\mathsf{c}, x^\mathsf{u})$ into  $p(x^\mathsf{u} \,|\, x^\mathsf{u}) p(x^\mathsf{u})$; compare \Cref{equation-secret-conditioning} to \Cref{proposition-backward-kernels} in \Cref{sec:gaussian_state_estimation}.
This sequence of two QR-style decompositions followed by Bayes' rule is central to our work.

Left-multiply \Cref{equation-state-space-rearranged-x} with the inverse of the left term on the right-hand side of \Cref{equation-secret-conditioning}, abbreviate 
\begin{align} \label{equation-transformed-noise}
u^\mathsf{c} \coloneqq U_\mathsf{c}^* u, \quad u^\mathsf{u} \coloneqq U_\mathsf{u}^* u,
\end{align} and sort the terms in the resulting expression to obtain
\begin{subequations}
\begin{align}
x^\mathsf{c} &= W_\mathsf{c}^* \Phi z + Z_\mathsf{c} u^\mathsf{c}
\\
x^\mathsf{u} &= (W_\mathsf{u}^* - G W_\mathsf{c}^*) \Phi z + G x^\mathsf{c} + Z_\mathsf{u} u^\mathsf{u}
\\ 
y^\mathsf{u} &= V_\mathsf{u}^* C W_\mathsf{c} x^\mathsf{c} + V_\mathsf{u}^* C W_\mathsf{u} x^\mathsf{u} + L_\mathsf{u} w
\\ 
y^\mathsf{c} &= S_\mathsf{c} x^\mathsf{c}.
\end{align}
\end{subequations}
Finally, eliminate $x^\mathsf{c}$ from this system via $x^\mathsf{c} = (S_c)^{-1} y^\mathsf{c}$:
\begin{subequations} \label{equation-reduced-model}
\begin{align}
y^\mathsf{c} &= S_\mathsf{c} W_\mathsf{c}^* \Phi z + S_\mathsf{c} Z_\mathsf{c} u^\mathsf{c}
\\
x^\mathsf{u} &= (W_\mathsf{u}^* - G W_\mathsf{c}^*) \Phi z + G (S_\mathsf{c})^{-1} y^\mathsf{c} + Z_\mathsf{u} u^\mathsf{u}
\\ 
y^\mathsf{u} &= V_\mathsf{u}^* C W_\mathsf{c} (S_\mathsf{c})^{-1} y^\mathsf{c} + V_\mathsf{u}^* C W_\mathsf{u} x^\mathsf{u} + L_\mathsf{u} w.
\end{align}
\end{subequations}
\Cref{equation-reduced-model} is a reduced version of \Cref{equation-to-be-reduced-model}.

\subsection{Time-varying model}
\label{sub:time_varying_model}
We return to the state-estimation task from \Cref{equation-full-ssm} and assume that the process noises $u_{0:T}$ and the observation noises $w_{0:T}$ are pairwise independent Gaussian variables with zero mean and unit covariance. However, any mean and covariance would apply.
Let $x^\mathsf{c}_t$, $x^\mathsf{u}_t$, $y^\mathsf{c}_t$, $y^\mathsf{u}_t$, $u^\mathsf{c}_t$, and $u^\mathsf{u}_t$ be transformations of $x_t$, $y_t$, and $u_t$ according to \Cref{equation-state-transformation,equation-transformed-noise}. Define corresponding matrices $V_{\mathsf{c}, t}$, $V_{\mathsf{u}, t}$, $W_{\mathsf{c}, t}$, $W_{\mathsf{u}, t}$, $S_{\mathsf{c}, t}$, $Z_{\mathsf{c}, t}$, $G_t$, $Z_{\mathsf{u}, t}$, and $L_{\mathsf{u}, t}$ that are produced like in \Cref{sec:model_reduction}.
The only difference is the additional time index.

At any $t$, $x_t$ can be reconstructed from $x_t^\mathsf{u}$ via
\begin{align} \label{equation-reconstruct-x}
x_t = W_{\mathsf{u}, t} x^\mathsf{u}_t + W_{\mathsf{c}, t} (S_\mathsf{c})^{-1} y_t^\mathsf{c}.
\end{align}
Applying the results from \Cref{sec:model_reduction} to \Cref{equation-full-ssm}, setting $z = x_t$, and plugging in \Cref{equation-reconstruct-x}, the reduced state-estimation task becomes the following:
The unconstrained state components $x^\mathsf{u}_t$ transition like
\begin{subequations} \label{equation-reduced-state-transition-t}
\begin{align}
x^\mathsf{u}_0 =&\,  G_0 S_{\mathsf{c}, 0}^{-1} y^\mathsf{c}_0 + Z_{\mathsf{u}, 0} u^\mathsf{u}_0
\label{equation-reduced-state-transition-t-0}
\\ 
x^\mathsf{u}_t =&\, 
\Psi_{1, t} x^\mathsf{u}_{t-1} + \Psi_{2, t} y_{t-1}^\mathsf{c} + G_t S_{\mathsf{c}, t}^{-1} y^\mathsf{c}_t + Z_{\mathsf{u}, t} u^\mathsf{u}_t
\label{equation-reduced-state-transition-t-t}
\\ 
\Psi_{1, t} \coloneqq&\, (W_{\mathsf{u}, t}^* - G_t W_{\mathsf{c}, t}) \Phi_t W_{\mathsf{u}, t-1}
\\ 
\Psi_{2, t} \coloneqq&\, (W_{\mathsf{u}, t}^* - G_t W_{\mathsf{c}, t}) \Phi_t W_{\mathsf{c}, t-1} S_{\mathsf{c}, t-1}^{-1}
\end{align}
\end{subequations}
where $1 \leq t \leq T$ holds.
The observation $y^\mathsf{u}_t$ becomes
\begin{align} \label{equation-reduced-unconstrained-observation-t}
y^\mathsf{u}_t &= V_{\mathsf{u}, t}^* C_t W_{\mathsf{c}, t}^* (S_{\mathsf{c}, t})^{-1} y^\mathsf{c}_t + V_{\mathsf{u}, t}^* C_t W_{\mathsf{u}, t}^* x_t^\mathsf{u} + L_{\mathsf{u}, t} w_t 
\end{align}
for $t=0, ..., T$. 
The constraint $y^\mathsf{c}_t$ relates to the others like 
\begin{subequations} \label{equation-reduced-constrained-observation-t}
\begin{align}
y^\mathsf{c}_0 =&\, S_{\mathsf{c}, 0} Z_{\mathsf{c}, 0} u^\mathsf{c}_0
\label{equation-reduced-constrained-observation-t-0}
\\
y^\mathsf{c}_t =&\,\Lambda_{1,t} x^\mathsf{u}_{t-1} + \Lambda_{2,t} y_{t-1}^\mathsf{c}  + S_{\mathsf{c}, t} Z_{\mathsf{c}, t} u^\mathsf{c}_t 
\label{equation-reduced-constrained-observation-t-t}
\\ 
\Lambda_{1,t} \coloneqq&\,  S_{\mathsf{c}, t} W_{\mathsf{c}, t}^* \Phi_t W_{\mathsf{u}, t-1}
\\ 
\Lambda_{2,t} \coloneqq&\, S_{\mathsf{c}, t} W_{\mathsf{c}, t}^* \Phi_t W_{\mathsf{c}, t-1} (S_{\mathsf{c}, t-1})^{-1} 
\end{align}
\end{subequations}
for $t=1, ..., T$.
The difference between the above equations and \Cref{equation-reduced-model} involves additional time indices and the application of \Cref{equation-reconstruct-x}.

\section{State estimation (after seeing observations)}
\label{sec:gaussian_state_estimation}
Since $U_{\mathsf{c}, t}$ and $U_{\mathsf{c}, t}$ span pairwise orthogonal spaces, and since $u_{0:T}$ and $w_{0:T}$ are pairwise independent, $u^\mathsf{c}_{0:T}$, $u^\mathsf{u}_{0:T}$, and $w_{0:T}$ must be pairwise independent. 
This independence implies linear-time state-estimation algorithms because the posterior distribution admits a sequential factorisation: \Cref{proposition:posterior-distribution-factorises}.

\begin{proposition}\label{proposition:posterior-distribution-factorises}
The posterior distribution over the unconstrained state $p(x^\mathsf{u}_{0:T} \mid y^\mathsf{u}_{0:T}, y^\mathsf{c}_{0:T})$ equals
\begin{align}\label{equation-posterior-distribution-factorises}
p(x^\mathsf{u}_{T} \mid y^\mathsf{u}_{0:T}, y^\mathsf{c}_{0:T})
\prod_{t=1}^T p(x^\mathsf{u}_{t-1}\mid x^\mathsf{u}_{t}, y^\mathsf{u}_{0:t-1}, y^\mathsf{c}_{0:t}).
\end{align}
\end{proposition}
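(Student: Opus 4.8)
The plan is to obtain the factorisation from two ingredients: a reverse-order application of the chain rule, and a conditional-independence (Markov) property of the reduced model in \Cref{equation-reduced-state-transition-t,equation-reduced-unconstrained-observation-t,equation-reduced-constrained-observation-t}. The crucial observation is that the right Markov state is not $x^\mathsf{u}_t$ alone but the pair $\xi_t \coloneqq (x^\mathsf{u}_t, y^\mathsf{c}_t)$, because $y^\mathsf{c}_t$ enters the $x^\mathsf{u}_t$-recursion at the same time index.

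First I would verify the augmented-chain structure. Reading off \Cref{equation-reduced-constrained-observation-t,equation-reduced-state-transition-t}: for $t \geq 1$, $y^\mathsf{c}_t$ is an affine function of $\xi_{t-1}$ and $u^\mathsf{c}_t$, and $x^\mathsf{u}_t$ is an affine function of $\xi_{t-1}$, $y^\mathsf{c}_t$, and $u^\mathsf{u}_t$; hence $\xi_t$ is an affine function of $\xi_{t-1}$ and $(u^\mathsf{c}_t, u^\mathsf{u}_t)$. Likewise, by \Cref{equation-reduced-unconstrained-observation-t}, $y^\mathsf{u}_t$ is an affine function of $\xi_t$ and $w_t$. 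Since $u^\mathsf{c}_{0:T}$, $u^\mathsf{u}_{0:T}$, $w_{0:T}$ are pairwise independent (as already argued at the start of \Cref{sec:gaussian_state_estimation}), the pair $(u^\mathsf{c}_t, u^\mathsf{u}_t)$ and the noise $w_t$ are independent of $(\xi_{0:t-1}, y^\mathsf{u}_{0:t-1})$. Consequently the joint density factorises in hidden-Markov-model form,
\[
p(\xi_{0:T}, y^\mathsf{u}_{0:T}) = p(\xi_0)\prod_{t=1}^T p(\xi_t \mid \xi_{t-1}) \prod_{t=0}^T p(y^\mathsf{u}_t \mid \xi_t),
\]
which in turn yields, for every $t$, the conditional independence that, given $\xi_t$, the past block $\mathcal{P}_t \coloneqq (x^\mathsf{u}_{0:t-1}, y^\mathsf{u}_{0:t-1}, y^\mathsf{c}_{0:t-1})$ and the future block $\mathcal{F}_t \coloneqq (x^\mathsf{u}_{t+1:T}, y^\mathsf{u}_{t:T}, y^\mathsf{c}_{t+1:T})$ are independent.

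Next I would apply the chain rule to $x^\mathsf{u}_{0:T}$ in decreasing index order, conditioning throughout on $(y^\mathsf{u}_{0:T}, y^\mathsf{c}_{0:T})$, to get $p(x^\mathsf{u}_{0:T} \mid y^\mathsf{u}_{0:T}, y^\mathsf{c}_{0:T}) = p(x^\mathsf{u}_T \mid y^\mathsf{u}_{0:T}, y^\mathsf{c}_{0:T}) \prod_{t=1}^T p(x^\mathsf{u}_{t-1} \mid x^\mathsf{u}_{t:T}, y^\mathsf{u}_{0:T}, y^\mathsf{c}_{0:T})$. In the $t$-th factor the target $x^\mathsf{u}_{t-1}$ and the conditioning variables $y^\mathsf{u}_{0:t-1}, y^\mathsf{c}_{0:t-1}$ all lie in $\mathcal{P}_t$, the variables $x^\mathsf{u}_{t+1:T}, y^\mathsf{u}_{t:T}, y^\mathsf{c}_{t+1:T}$ lie in $\mathcal{F}_t$, and the remaining conditioning variables are exactly $\xi_t = (x^\mathsf{u}_t, y^\mathsf{c}_t)$. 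Invoking $\mathcal{P}_t \perp \mathcal{F}_t \mid \xi_t$ lets me drop all $\mathcal{F}_t$-variables from the conditioning set without changing the law of $x^\mathsf{u}_{t-1}$, so that $p(x^\mathsf{u}_{t-1} \mid x^\mathsf{u}_{t:T}, y^\mathsf{u}_{0:T}, y^\mathsf{c}_{0:T}) = p(x^\mathsf{u}_{t-1} \mid x^\mathsf{u}_t, y^\mathsf{u}_{0:t-1}, y^\mathsf{c}_{0:t})$, where I have used $y^\mathsf{c}_{0:t-1} \cup \{y^\mathsf{c}_t\} = y^\mathsf{c}_{0:t}$. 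Substituting this into the chain-rule expansion gives \Cref{equation-posterior-distribution-factorises}.

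The main obstacle is bookkeeping rather than depth: one must correctly identify $(x^\mathsf{u}_t, y^\mathsf{c}_t)$ as the Markov state and check that conditioning on $y^\mathsf{c}_{0:T}$ — which is part of the latent chain — does not spoil the factorisation, precisely because each $y^\mathsf{c}_t$ is a component of $\xi_t$. I would also remark in passing that all the conditional densities used are well defined: the full-rank assumptions on $\{Q_t\}_t$ and $\{F_t\}_t$ make the relevant Gaussian blocks $Z_{\mathsf{u},t}$, $L_{\mathsf{u},t}$ nondegenerate after the reduction, and the conditioning events have positive density.
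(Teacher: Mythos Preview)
Your proof is correct and follows essentially the same two-step strategy as the paper: a reverse chain-rule expansion followed by dropping future conditioning variables via conditional independence inherited from the pairwise independence of the noises. The paper's version is terser and simply asserts the required conditional independence, whereas you make it rigorous by explicitly identifying $\xi_t=(x^\mathsf{u}_t,y^\mathsf{c}_t)$ as the Markov state of a hidden-Markov model---a useful clarification the paper glosses over.
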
 
\begin{proof}
Factorise $p(x^\mathsf{u}_{0:T} \mid y^\mathsf{u}_{0:T}, y^\mathsf{c}_{0:T})$ into the equivalent
\begin{align}
p(x^\mathsf{u}_{T} \mid y^\mathsf{u}_{0:T}, y^\mathsf{c}_{0:T})
\prod_{t=1}^T p(x^\mathsf{u}_{t-1} \mid x^\mathsf{u}_t, y^\mathsf{u}_{0:T}, y^\mathsf{c}_{0:T})
\end{align}
and apply
\begin{align}
\begin{split}
&p(x^\mathsf{u}_{t-1} \mid x^\mathsf{u}_t, y^\mathsf{u}_{0:T}, y^\mathsf{c}_{0:T}) =
p(x^\mathsf{u}_{t-1} \mid x^\mathsf{u}_t, y^\mathsf{u}_{0:t-1}, y^\mathsf{c}_{0:t})
\end{split}
\end{align}
which holds since all process and observation noises are pairwise independent.
\end{proof}
\Cref{sec:the_algorithm} will explain numerically robust, linear-time algorithms for evaluating each term in \Cref{equation-posterior-distribution-factorises}.
These terms include the filtering distributions $\{p(x^\mathsf{u}_{t} \mid y^\mathsf{u}_{0:t}, y^\mathsf{c}_{0:t})\}_{t=0}^T$, thus implementing algorithms based on \Cref{proposition:posterior-distribution-factorises} yields a numerically robust Kalman filter in the reduced system.
\Cref{equation-posterior-distribution-factorises} can also be used to parametrise the smoothing distributions $\{p(x^\mathsf{u}_t \mid y^\mathsf{u}_{0:T}, y^\mathsf{c}_{0:T}\}_{t=0}^T$, via a backwards-sequence of marginalisations, numerically robust algorithms for which are well-known \citep{grewal2014kalman}. As such, \Cref{proposition:posterior-distribution-factorises} also implies a numerically robust Rauch--Tung-Striebel smoother in the reduced system.
More on filtering and smoothing in \Cref{sec:the_algorithm}.

A statement similar to \Cref{proposition:posterior-distribution-factorises} can be made about the marginal likelihood of the observations:
\begin{proposition}\label{proposition:marginal-likelihood-factorises}
The marginal likelihood of the observations $p(y^\mathsf{u}_{0:T}, y^\mathsf{c}_{0:T})$ factorises into the equivalent (read $y_{0:-1} \coloneqq \emptyset$) 
\begin{align}\label{equation-marginal-likelihood-factorisation}
\prod_{t=0}^T p(y^\mathsf{c}_t \mid y^\mathsf{c}_{0:t-1}, y^\mathsf{u}_{0:t-1}) p(y^\mathsf{u}_t \mid y^\mathsf{c}_{0:t}, y^\mathsf{u}_{0:t-1}).
\end{align}
\end{proposition}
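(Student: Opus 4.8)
The plan is to read the claimed identity as the chain rule of probability applied to the $2(T+1)$ observation blocks listed in the temporally interleaved order $y^\mathsf{c}_0, y^\mathsf{u}_0, y^\mathsf{c}_1, y^\mathsf{u}_1, \dots, y^\mathsf{c}_T, y^\mathsf{u}_T$. Under this ordering, the blocks preceding $y^\mathsf{c}_t$ are exactly $y^\mathsf{c}_{0:t-1}$ together with $y^\mathsf{u}_{0:t-1}$, and the blocks preceding $y^\mathsf{u}_t$ are exactly $y^\mathsf{c}_{0:t}$ together with $y^\mathsf{u}_{0:t-1}$; hence the telescoping factorisation ``each block given all of its predecessors'' is verbatim \Cref{equation-marginal-likelihood-factorisation}, with the convention $y_{0:-1} = \emptyset$ for the $t=0$ terms.

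To turn this into a proof I would first check that all densities involved exist, i.e.\ that $p(y^\mathsf{u}_{0:T}, y^\mathsf{c}_{0:T})$ is a nondegenerate Gaussian on $\mathbb{R}^{(T+1)(\ell+r)}$ and that each conditional on the right-hand side is a nondegenerate Gaussian density. This is where the standing full-rank hypotheses on $F_t$, $Q_t$, $C_t$ (and $\ell + r \le n$) enter: they guarantee that the matrices $S_{\mathsf{c},t}$, $Z_{\mathsf{c},t}$, $Z_{\mathsf{u},t}$ produced by the reduction of \Cref{sec:model_reduction} are invertible and that the observation-noise gain $L_{\mathsf{u},t}$ has full rank. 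Inspecting the reduced recursions \Cref{equation-reduced-state-transition-t-0,equation-reduced-constrained-observation-t-0} and \Cref{equation-reduced-unconstrained-observation-t}, the constrained block $y^\mathsf{c}_t$ carries the process-noise term $S_{\mathsf{c},t} Z_{\mathsf{c},t} u^\mathsf{c}_t$ with an invertible gain and the unconstrained block $y^\mathsf{u}_t$ carries the observation-noise term $L_{\mathsf{u},t} w_t$ with an invertible gain; since $u^\mathsf{c}_{0:T}$, $u^\mathsf{u}_{0:T}$, $w_{0:T}$ are independent standard Gaussians, the whole observation sequence is an affine image of independent Gaussian noise with a surjective linear part, hence nondegenerate Gaussian, and the same remains true after conditioning on any prefix. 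With this in hand I would simply invoke the chain rule $p(\xi_1,\dots,\xi_N) = \prod_{k=1}^N p(\xi_k \mid \xi_{1:k-1})$ for the interleaved sequence and relabel the predecessor sets as in the first paragraph.

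The point worth emphasising is that, unlike \Cref{proposition:posterior-distribution-factorises}, no conditional-independence (Markov) argument is needed here: the conditioning sets in \Cref{equation-marginal-likelihood-factorisation} already are the full prefixes under the chosen ordering, so nothing is dropped and no appeal to the pairwise independence of the noises is required for the factorisation itself. The only genuine work is therefore the regularity bookkeeping of the previous paragraph --- promoting the chain-rule identity of measures to an identity of Lebesgue densities --- and even that is routine given the full-rank assumptions. If one additionally wished to rewrite each factor with a bounded conditioning window, as is done for the linear-time recursions of \Cref{sec:the_algorithm}, that reduction would use the noise independence exactly as in the proof of \Cref{proposition:posterior-distribution-factorises}, but the proposition as stated does not call for it.
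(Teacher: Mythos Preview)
Your proposal is correct and takes essentially the same approach as the paper: the paper's proof is the one-line remark ``apply $p(y^\mathsf{u}, y^\mathsf{c}) = p(y^\mathsf{u} \mid y^\mathsf{c}) p(y^\mathsf{c})$ liberally,'' which is exactly the chain rule in the interleaved order you describe. Your additional regularity bookkeeping (nondegeneracy of the Gaussians via the full-rank assumptions) is a welcome elaboration the paper omits, but the core argument is identical.
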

\begin{proof}
Apply $p(y^\mathsf{u}, y^\mathsf{c}) = p(y^\mathsf{u} \mid y^\mathsf{c}) p(y^\mathsf{c})$ liberally.
\end{proof}
Each term in \Cref{equation-marginal-likelihood-factorisation} emerges during the forward pass of computing the terms in \Cref{proposition:posterior-distribution-factorises}; more in \Cref{sec:the_algorithm}.

\subsection{Numerically robust Gaussian conditioning}
Next, we zoom in on implementing sequential algorithms based on \Cref{proposition:posterior-distribution-factorises,proposition:marginal-likelihood-factorises}.
Numerically robust state estimation in this context hinges on the following routines.
\begin{proposition}\label{proposition-backward-kernels}
Let $x$ and $y$ be Gaussian variables relating as
\begin{align}
x \sim \mathcal{N}(m, L L^*), \quad y \mid x \sim \mathcal{N}(A x + b, B B^*).
\end{align}
Then, parametrisations of $p(y)$ and $p(x \mid y)$ can be computed from $m$, $L$, $A$, $b$, and $B$, without ever forming $L L^*$ or $B B^*$.
\end{proposition}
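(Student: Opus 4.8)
The plan is to bypass Gaussian conditioning through covariance arithmetic entirely and instead realise $p(y)$ and $p(x \mid y)$ by triangularising a single structured matrix assembled from $A$, $L$, and $B$ --- the ``array'' or square-root form familiar from square-root Kalman filtering \citep{grewal2014kalman}. First I would exhibit a (non-triangular) square root of the joint covariance of $(y, x)$ directly from the given factors: setting
\begin{align}
\mathcal{M} \coloneqq \begin{pmatrix} (AL)^* & L^* \\ B^* & 0 \end{pmatrix},
\end{align}
a block multiplication gives
\begin{align}
\mathcal{M}^* \mathcal{M} = \begin{pmatrix} ALL^*A^* + BB^* & ALL^* \\ LL^*A^* & LL^* \end{pmatrix},
\end{align}
which is exactly the joint covariance of $(y, x)$ under the model; the means $Am+b$ of $y$ and $m$ of $x$ are affine and tracked separately.

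Next I would compute a QR decomposition $\mathcal{M} = \mathcal{Q}\mathcal{R}$. Since the leading columns of $\mathcal{M}$ carry the $y$-coordinates and the remaining ones the $x$-coordinates, $\mathcal{R}$ has the staircase block form $\mathcal{R} = \left(\begin{smallmatrix} R_{11} & R_{12} \\ 0 & R_{22} \end{smallmatrix}\right)$ with $R_{11}$ and $R_{22}$ square and upper triangular; when $L$ and $B$ are square, as in the reduced model of \Cref{sec:model_reduction}, $\mathcal{M}$ is square and a thin QR suffices, and otherwise one uses a complete QR (or zero-padding). Because $\mathcal{Q}^*\mathcal{Q} = I$ we get $\mathcal{R}^*\mathcal{R} = \mathcal{M}^*\mathcal{M}$, and comparing blocks delivers everything at once: $R_{11}^*$ is a Cholesky factor of $\operatorname{Cov}(y)$, so $p(y) = \mathcal{N}(Am+b,\, R_{11}^* R_{11})$; the Schur-complement identity $R_{22}^* R_{22} = LL^* - LL^*A^*(\operatorname{Cov} y)^{-1} A L L^*$ identifies $R_{22}^*$ as a Cholesky factor of the conditional covariance; and the cross block gives the gain $K = R_{12}^* R_{11}^{-*}$ by one triangular solve, so that $p(x \mid y) = \mathcal{N}(m + K(y - Am - b),\, R_{22}^* R_{22})$. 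Choosing an LQ instead of a QR decomposition yields the same objects in lower-triangular form, matching the conventions used elsewhere in \Cref{sec:model_reduction}.

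I expect the main obstacle to be organisational rather than mathematical: one must verify that the QR factor genuinely inherits the claimed block-triangular structure under the chosen column ordering, select the thin/complete variant so that $R_{11}$ and $R_{22}$ come out square and triangular, and observe that the conditional formulas presuppose $R_{11}$ invertible, i.e.\ $\operatorname{Cov}(y)$ nonsingular --- which is exactly where the full-rank hypotheses on $F$ and $C$ (through $L_\mathsf{u}$ and $S_\mathsf{c}$) enter once the routine is invoked inside the reduced model. Granting this, the numerical-robustness claim is immediate by inspection of the construction: $LL^*$ and $BB^*$ are never formed; only $L$, $B$, $A$, $m$, $b$, the (implicitly represented) orthogonal factor $\mathcal{Q}$, and triangular solves against $R_{11}$ are used.
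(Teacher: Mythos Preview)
Your proposal is correct and essentially identical to the paper's argument: the paper assembles the same joint square root $\left(\begin{smallmatrix} AL & B \\ L & 0 \end{smallmatrix}\right)$ and takes a complete LQ decomposition, which is precisely your QR of $\mathcal{M}$ read through the transpose (your $R_{11}^*, R_{12}^*, R_{22}^*$ are the paper's $L_1, L_\star, L_2$, and your gain $K=R_{12}^* R_{11}^{-*}$ coincides with the paper's $K=L_\star L_1^{-1}$). Your remark that switching to LQ recovers the lower-triangular conventions is exactly what the paper does from the start.
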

\begin{proof}
The following algorithm proves \Cref{proposition-backward-kernels}.
Complete LQ decompose the Cholesky factor of the joint law $p(y, x)$,
\begin{align}
\begin{pmatrix}
A L & B \\ 
L & 0
\end{pmatrix}
=
\begin{pmatrix}
L_1 & 0 \\ 
L_\star & L_2
\end{pmatrix}
\begin{pmatrix}
T_1 \\ T_2
\end{pmatrix}
\end{align}
Let $K \coloneqq L_\star L_1^{-1}$. 
The parameters of $p(y)$ and $p(x \mid y)$ are 
\begin{subequations}
\begin{align}
y &\sim \mathcal{N}(A m + b, L_1 L_1^*), \\
x \mid y &\sim \mathcal{N}(m - K (A m + b - y), L_2 L_2^*), \label{equation-proof-backward-kernel}
\end{align}
\end{subequations}
because the identity
\begin{align}
\begin{pmatrix}
A L & B \\ 
L & 0
\end{pmatrix}
\begin{pmatrix}
A L & B \\ 
L & 0
\end{pmatrix}^*
&=
\begin{pmatrix}
L_1 L_1^* & L_1 L_\star^* \\ 
L_\star L_1^* & L_\star L_\star^* + L_2 L_2^*
\end{pmatrix}
\end{align}
holds: $L_1 L_1^*$ matches the known formula for Gaussian marginals, and $L_2 L_2^*$ does the same for conditional covariances,
\begin{align}
L_2 L_2^* = L L^* - L_\star L_\star^* = L L^* - K L_1 L_1^* K^*
\end{align}
which proves the claim.
\end{proof}
The fact that \Cref{proposition-backward-kernels} never forms a full covariance matrix in combination with the numerical stability of QR decompositions makes algorithms that condition Gaussians via \Cref{proposition-backward-kernels} numerically robust.
To derive a version of our algorithms that avoids QR or exclusively uses precision matrices, replace \Cref{proposition-backward-kernels} accordingly; see also \citet{Psiaki1999}'s information form algorithm.

\Cref{proposition-backward-kernels} does not involve realisations of observed variables; changing that turns \Cref{proposition-backward-kernels} into a numerically robust implementation of Bayes' rule for Gaussian variables:
\begin{proposition}\label{proposition:bayesian-update}
Consider the setting of \Cref{proposition-backward-kernels} and denote by $\hat y$ a realisation of $y$. Then, evaluate $p(x \mid y=\hat y)$ and $p(y=\hat y)$ by computing $p(x \mid y)$ and $p(y)$ via \Cref{proposition-backward-kernels}, followed by evaluating the conditional and marginal at $y = \hat y$.
\end{proposition}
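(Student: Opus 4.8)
The plan is to treat \Cref{proposition:bayesian-update} as an immediate consequence of \Cref{proposition-backward-kernels} together with the elementary fact that, for a jointly Gaussian pair, the Bayesian posterior $p(x \mid y = \hat y)$ is \emph{by definition} the conditional kernel $y \mapsto p(x \mid y)$ evaluated at the realisation $\hat y$, and the evidence $p(y = \hat y)$ is the marginal density of $y$ evaluated at $\hat y$. So the only substantive content is to check that ``evaluate at $\hat y$'' preserves the numerically robust, factored representation, i.e.\ that no full covariance is ever assembled.

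First I would recall from \Cref{proposition-backward-kernels} the parametrisations $y \sim \mathcal{N}(Am + b,\, L_1 L_1^*)$ and $x \mid y \sim \mathcal{N}(m - K(Am + b - y),\, L_2 L_2^*)$, with $K = L_\star L_1^{-1}$ and $L_1, L_\star, L_2$ obtained from a single LQ decomposition. Substituting $y = \hat y$ into the affine mean map gives $x \mid y = \hat y \sim \mathcal{N}(m - K(Am + b - \hat y),\, L_2 L_2^*)$; the Cholesky factor $L_2$ of the posterior covariance does not depend on $\hat y$ and is never multiplied out, so the posterior is delivered in exactly the same factored form as the prior, at the cost of one matrix--vector product and one triangular solve (applying $K$ to the residual $Am + b - \hat y$). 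This settles the posterior half.

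Second, for the evidence I would evaluate the Gaussian density $p(y = \hat y) = \mathcal{N}(\hat y;\, Am + b,\, L_1 L_1^*)$ directly from the triangular factor $L_1$: the Mahalanobis term $(\hat y - Am - b)^* (L_1 L_1^*)^{-1} (\hat y - Am - b)$ equals $\lVert L_1^{-1}(\hat y - Am - b)\rVert^2$, computed by a single forward substitution, and the log-normaliser uses $\log\det(L_1 L_1^*) = 2\sum_i \log (L_1)_{ii}$. Hence the marginal likelihood is obtained without forming $L_1 L_1^*$, which is the robustness claim; combining with the posterior step proves the proposition.

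The point requiring care -- rather than a genuine obstacle -- is the degenerate case in which $L_1$ is not square/invertible, i.e.\ the marginal law of $y$ is itself singular (as happens for the constrained component $y^\mathsf{c}$ arising in \Cref{sec:model_reduction}). I would note explicitly that the density $p(y = \hat y)$ must then be read with respect to Lebesgue measure on the affine support of $y$ -- equivalently, as the density of the nondegenerate sub-vector selected by the nonzero diagonal entries of the triangular factor -- and that this changes nothing in the factored arithmetic: one simply discards the structurally zero rows before the forward substitution and the diagonal log-sum. With that reading in place, the proof is complete.
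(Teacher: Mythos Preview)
Your proposal is correct. The paper actually states \Cref{proposition:bayesian-update} without any proof, treating it as an immediate corollary of \Cref{proposition-backward-kernels}; your argument spells out exactly the implicit reasoning---substitute $\hat y$ into the affine conditional mean for the posterior, and evaluate the marginal density from the triangular factor $L_1$ via a forward solve and a diagonal log-sum for the evidence---and thus supplies strictly more detail than the paper itself. One cosmetic point: the diagonal entries of $L_1$ from an LQ decomposition are not guaranteed positive, so the log-determinant should be $2\sum_i \log \lvert (L_1)_{ii}\rvert$; your remark on the degenerate case is sound but goes beyond anything the paper discusses at this proposition.
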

The reason why we separate \Cref{proposition-backward-kernels} from \Cref{proposition:bayesian-update} is that \Cref{equation-posterior-distribution-factorises} involves the backward transitions $p(x_{t-1} \mid x_t)$, parametrisations of which require \Cref{proposition-backward-kernels}, not \Cref{proposition:bayesian-update}.
This distinction means that \Cref{proposition-backward-kernels} is a critical part of numerically robust Gaussian smoothers \citep{gibson2005robust}.

\subsection{The algorithm}
\label{sec:the_algorithm}

Denote the realisations of $y_t^\mathsf{c}$ and $y_t^\mathsf{u}$ by $\hat y_t^\mathsf{c}$ and $\hat y_t^\mathsf{u}$, respectively. The distinction between an observed variable and its realisation is important to illustrate in which sense \Cref{proposition:bayesian-update} is applicable.
For example, $p(x^\mathsf{u}_t \mid x_{t-1}^\mathsf{u}, y_{t-1}^\mathsf{c}, y_t^\mathsf{c})$ is not a transition density from $x_{t-1}^\mathsf{u}$ to $x_t^\mathsf{u}$, but $p(x^\mathsf{u}_t \mid x_{t-1}^\mathsf{u}, \hat y_{t-1}^\mathsf{c}, \hat y_t^\mathsf{c})$ is since the observed variables $\hat y_{t-1}^\mathsf{c}$ and $\hat y_t^\mathsf{c}$ are realised.

When reading the following algorithm, recall that all transition densities and likelihoods are Gaussian and parametrised as in \Cref{sec:gaussian_state_estimation}. Therefore, all numerical linear algebra reduces to \Cref{proposition:bayesian-update,proposition-backward-kernels}.

To initialise, evaluate $p(\hat y_0^\mathsf{c})$ (\Cref{equation-reduced-constrained-observation-t-0}) and store the result.
Plug the prior  $p(x_0^\mathsf{u} \mid \hat y_0^\mathsf{c})$  (\Cref{equation-reduced-state-transition-t-0}) and the likelihood $p(\hat y_0^\mathsf{u} \mid x_0^\mathsf{u}, \hat y_0^\mathsf{c})$ (\Cref{equation-reduced-unconstrained-observation-t}) into \Cref{proposition:bayesian-update}. This parametrises the filtering distribution $p(x_0^\mathsf{u} \mid \hat y^\mathsf{u}_{0}, \hat y^\mathsf{c}_0)$ and the likelihood increment $p(\hat y_0^\mathsf{u} \mid \hat y_0^\mathsf{c})$. Store both.

Then, for each $t=1, ..., T$, do the following. Assume that at each stage $t$, a parametrisation of the filtering distribution $p(x_{t-1}^\mathsf{u} \mid \hat y^\mathsf{u}_{0:t-1}, \hat y^\mathsf{c}_{0:t-1})$ is available.
\begin{enumerate}
    \item Plug the filtering distribution  $p(x_{t-1}^\mathsf{u} \mid \hat y^\mathsf{u}_{0:t-1}, \hat y^\mathsf{c}_{t-1})$ and the likelihood $p(\hat y_t^\mathsf{c} \mid x_{0:t-1}^\mathsf{u}, \hat y_{t-1}^\mathsf{c})$ (\Cref{equation-reduced-constrained-observation-t-t}) into \Cref{proposition:bayesian-update}. 
    This procedure parametrises the conditional distribution $p(x_{t-1}^\mathsf{u} \mid \hat y^\mathsf{u}_{0:t-1}, \hat y^\mathsf{c}_{0:t})$ and the likelihood increment $p(\hat y_t^\mathsf{c} \mid \hat y_{0:t-1}^\mathsf{c}, \hat y_{0:t-1}^\mathsf{u})$. Store both.
    \item Plug the conditional distribution $p(x_{t-1}^\mathsf{u} \mid \hat y^\mathsf{u}_{0:t-1}, \hat y^\mathsf{c}_{0:t})$ and the transition density $p(x^\mathsf{u}_t \mid x_{t-1}^\mathsf{u}, \hat y_{t-1}^\mathsf{c}, \hat y_t^\mathsf{c})$ (\Cref{equation-reduced-state-transition-t-t}) into \Cref{proposition-backward-kernels}. This parametrises the predicted distribution $p(x_{t}^\mathsf{u} \mid \hat y^\mathsf{u}_{0:t-1}, \hat y^\mathsf{c}_{0:t})$ and the backward density $p(x_{t-1}^\mathsf{u} \mid x_t^\mathsf{u}, \hat y^\mathsf{u}_{0:t-1}, \hat y^\mathsf{c}_{0:t})$. 
    For a Gaussian filter, do not store the backward density; otherwise, do.
    \item Plug the predicted distribution $p(x_{t}^\mathsf{u} \mid \hat y^\mathsf{u}_{0:t-1}, \hat y^\mathsf{c}_{0:t})$ and the likelihood $p(\hat y_t^\mathsf{u} \mid x_t^\mathsf{u}, \hat y_t^\mathsf{c})$ (\Cref{equation-reduced-unconstrained-observation-t}) into \Cref{proposition:bayesian-update}. This parametrises the next filtering distribution $p(x_{t}^\mathsf{u} \mid \hat y^\mathsf{u}_{0:t}, \hat y^\mathsf{c}_{0:t})$ and the likelihood increment $p(\hat y^\mathsf{u}_t \mid \hat y^\mathsf{u}_{0:t-1}, \hat y^\mathsf{c}_{0:t})$. Store both and repeat with $t=t+1$.
\end{enumerate}

\subsection{Computational complexity}
\label{sec:computational_complexity}
The computational complexity of the algorithms in \Cref{sec:gaussian_state_estimation} compares to that of a conventional Gaussian filter/smoother as follows.
Everything in \Cref{sec:model_reduction} happens before encountering realisations, so it shall not count towards the runtime complexity.
Only the procedures in \Cref{sec:gaussian_state_estimation} do.
The following analysis exclusively counts the complexity of matrix-matrix operations (including QR) because, at least asymptotically, these should dominate the algorithm's runtime as the systems increase in size.
Constant multiplicative factors are almost always ignored; we use $\mathcal{O}(\cdot)$ notation.

The computational complexity of the QR decomposition of an $n \times m$ matrix ($n \geq m$) is $\mathcal{O}(n m^2)$. Solving a triangular linear system with $k$ rows and columns via backward substitution costs $\mathcal{O}(k^2)$.
Thus, \Cref{proposition-backward-kernels} with a $k$-dimensional prior and an $m$-dimensional observation costs $\mathcal{O}((k + m)^3 + k m^2)$.

Recall the dimensions of the state variables: $x^\mathsf{u} \in \mathbb{R}^{n-\ell}$, $y^\mathsf{c} \in \mathbb{R}^{\ell}$, and $y^\mathsf{u} \in \mathbb{R}^r$.
\Cref{table:blas_level_iii_operations_after_seeing_data_} lists the floating-point operation count of a single step of the algorithm in \Cref{sec:the_algorithm} in ``filtering mode'', which means that backward densities are not stored (see step 2)). 
\begin{table}[t]
\caption{Matrix-matrix operations in a single step of a numerically robust Kalman filter on the reduced model.}
\label{table:blas_level_iii_operations_after_seeing_data_}
\begin{center}
\begin{tabular}{| c | l | c | c |}
\hline
\bf Step & \bf Operation & \bf Matrix size & \bf Complexity $\mathcal{O}(\cdot)$
\\ 
\hline
1)& QR of a square matrix & $n$ & $n^3$
\\
1)& $n-\ell$ backward subst. & $\ell$  & $(n-\ell) \ell^2$
\\
2) & QR of a square matrix & $2(n-\ell)$  & $(n-\ell)^3$
\\
3) & QR of a square matrix & $n+r-\ell$ & $(n+r-\ell)^3$
\\
3) & $r$ backward subst. & $r$ & $r^3$
\\
\hline 
\end{tabular}
\end{center}
\end{table}
For reference, the operation count of a numerically robust Gaussian filter on the unreduced system is $\mathcal{O}(n^3 + (n+m)^3 + n m^2)$.
The ratio of the total of each set of complexities describes the reduction in runtime when switching from one algorithm to the other.
For example, if $\ell = n/2$ and $r=0$, our filter requires $0.3$ of the floating point operations of the unreduced filter (\Cref{sec:experiments}).
The experiments demonstrate that such ratios are realised for sufficiently large systems.

\section{Experiments}
\label{sec:experiments}
Two experiments are considered.\footnote{An open-source JAX implementation is on GitHub: \\ \url{https://github.com/pnkraemer/code-robust-state-estimation-singular-noise}.} 
The first one demonstrates that the model reduction improves the runtime of a (numerically robust) filter. The second one underlines that our algorithm improves the numerical robustness of a fixed-interval smoother.
\subsection{Demonstrate reduction in computational complexity}
For a set of values for $n$, $\ell$, and $r$, we populate all parameters in \Cref{equation-full-ssm} with samples from independent Gaussian variables with zero mean and unit covariance. For each run, we sample $T=50$ observations from this model.
When evaluating the runtimes of different filters, the precise values of the system parameters are irrelevant; only the matrix sizes matter.
We measure the wall time (fastest of three runs, single precision) of two numerically robust Kalman filters including marginal likelihoods: one that operates on the reduced model and one that operates on the unreduced model. We predict the performance gains via \Cref{table:blas_level_iii_operations_after_seeing_data_} and show the predicted and realised ratios in \Cref{table:ratio_of_runtimes_for_varying_n_m_and_r_reduced_model_over_unreduced_model}.
\begin{table}[t]
\caption{Runtime ratios for varying $n$, $m$, and $r$--- reduced over unreduced version---in a randomly populated state-space model. }
\label{table:ratio_of_runtimes_for_varying_n_m_and_r_reduced_model_over_unreduced_model}
\begin{center}
\begin{tabular}{ | c | c | c | c | c | c | c | }
\hline
$\ell$ & $r$ & $n=10$ & $n=100$ & $n=1000$ & Prediction \\
\hline
$n/2$ & $0$ & 0.81 & 0.44 & 0.28 & 0.30 \\
$n/4$ & $0$ & 1.06 & 0.70 & 0.57 & 0.63 \\
$n/4$ & $n/4$ & 1.32 & 0.92 & 0.57 & 0.54 \\
$n/8$ & $n/8$ & 1.15 & 1.05 & 0.87 & 0.89 \\
\hline
\end{tabular}
\end{center}
\end{table}
Two observations:
First, most ratios are strictly below one, which implies that model reduction speeds up computation. 
Second, as $n$ increases, the runtime ratios approach the predictions. 
In summary, reducing the model improves the numerical efficiency of a (robust) Gaussian filter.

\subsection{Demonstrate improved numerical robustness}
With numerical efficiency covered, next, we demonstrate how our version of a Gaussian smoother is more robust than existing Gaussian smoothers on reduced models \citep{AitElFquih2011}.
To this end, select a range of $n$ and $\ell$, set $\Phi_t = I_n$, $C_t = (I_\ell, 0)$, $r=0$, and choose $Q_t = H_n$, where $H_n$ is a Hilbert matrix with $n$ rows.
The motivation for this model is a combination of simplicity and the fact that the notoriously ill-conditioned Hilbert matrix poses numerical stability challenges for smoothing -- the larger $n$, the worse the conditioning \citep{AitElFquih2011,kramer2024stable}.
We sample $T=500$ observations and evaluate a reference mean and covariance of $p(x_0 \mid y_{0:T})$ with a numerically-robust fixed-point smoother \citep{kramer2025numerically}.
Then, we compute corresponding mean and covariance estimates of three smoothers: (i) our algorithm, (ii) a conventional fixed-interval smoother in a reduced model that uses Cholesky-decompositions to solve linear systems, and (iii) same as (ii), but using LU decompositions to solve linear systems -- lower is better, and a $\log_{10}$-mean-absolute-error (MAE) of negative infinity would be perfect reconstruction.
\Cref{table:log10_mean_absolute_error_of_reconstructing_x_0_mid_y_0_t} shows the $\log_{10}$-MAEs of mean and covariances.
\begin{table}[t]
\caption{Log10-MAEs in the Hilbert-matrix model. Lower is better.}
\label{table:log10_mean_absolute_error_of_reconstructing_x_0_mid_y_0_t}
\begin{center}
\begin{tabular}{| c | c | c | c | c |}
\hline
$n$ & $\ell$ & Ours & Conventional (LU) & Conventional (Chol.) \\
\hline
5 & 2 & -17.7 & -17.5 & -17.7 \\
6 & 3 & -17.6 & -16.7 & -16.5 \\
7 & 3 & -17.7 & 3.0 & NaN \\
8 & 4 & -17.3 & 58.5 & NaN \\
9 & 4 & -15.9 & 207.4 & NaN \\
10 & 5 & -14.5 & 144.5 & 263.7 \\
11 & 5 & -5.7 & NaN & NaN \\
\hline
\end{tabular}
\end{center}
\end{table}
\Cref{table:log10_mean_absolute_error_of_reconstructing_x_0_mid_y_0_t} shows how our algorithm is more robust than the others because the $\log_{10}$-mean-absolute-errors are much lower.

\section{Conclusion}
This article presented a set of numerically robust Gaussian filters and smoothers that, in contrast to existing methods, simultaneously address filtering, smoothing, numerical robustness, and marginal likelihood computation. 
To get there, a sequence of QR-style decompositions was combined with Bayes' rule to first isolate and then eliminate fully determined components of the state. Paired with sequential factorisations, our approach enabled numerically stable conditioning of Gaussian variables, resulting in efficient and robust state estimation algorithms. 
A series of numerical experiments validated these improvements.

\bibliographystyle{apalike}
\bibliography{refs}

\begin{thebibliography}{}

\bibitem[Ait-El-Fquih and Desbouvries, 2011]{AitElFquih2011}
Ait-El-Fquih, B. and Desbouvries, F. (2011).
\newblock Fixed-interval {K}alman smoothing algorithms in singular state--space
  systems.
\newblock {\em Journal of Signal Processing Systems}, 65(3):469--478.

\bibitem[Anderson and Moore, 1979]{Anderson1979}
Anderson, B. D.~O. and Moore, J.~B. (1979).
\newblock {\em Optimal filtering}.
\newblock Prentice-Hall, Inc.

\bibitem[Geng et~al., 2022]{Geng2022}
Geng, L.-H., Wills, A.~G., Ninness, B., and Sch{\"o}n, T.~B. (2022).
\newblock Smoothed state estimation via efficient solution of linear equations.
\newblock {\em IEEE Transactions on Automatic Control}, 68(10):5877--5889.

\bibitem[Ghanbarpourasl and Zobar, 2022]{Ghanbarpourasl2022}
Ghanbarpourasl, H. and Zobar, S. (2022).
\newblock Kalman filtering with linear state equality constraints: A method
  based on separating variables.
\newblock {\em IEEE Access}, 10:35460--35468.

\bibitem[Gibson and Ninness, 2005]{gibson2005robust}
Gibson, S. and Ninness, B. (2005).
\newblock Robust maximum-likelihood estimation of multivariable dynamic
  systems.
\newblock {\em Automatica}, 41(10):1667--1682.

\bibitem[Grewal and Andrews, 2014]{grewal2014kalman}
Grewal, M.~S. and Andrews, A.~P. (2014).
\newblock {\em Kalman filtering: Theory and Practice with {MATLAB}}.
\newblock John Wiley \& Sons.

\bibitem[Kalman, 1961]{Kalman1961}
Kalman, R.~E. (1961).
\newblock New results in linear filtering and prediction theory.
\newblock {\em Transactions of the ASME, Journal of Basic Engineering},
  82:35--45.

\bibitem[Ko and Bitmead, 2007]{ko2007state}
Ko, S. and Bitmead, R.~R. (2007).
\newblock State estimation for linear systems with state equality constraints.
\newblock {\em Automatica}, 43(8):1363--1368.

\bibitem[Kr{\"a}mer, 2025]{kramer2025numerically}
Kr{\"a}mer, N. (2025).
\newblock Numerically robust fixed-point smoothing without state augmentation.
\newblock {\em Transactions on Machine Learning Research}.

\bibitem[Kr{\"a}mer and Hennig, 2024]{kramer2024stable}
Kr{\"a}mer, N. and Hennig, P. (2024).
\newblock Stable implementation of probabilistic {ODE} solvers.
\newblock {\em Journal of Machine Learning Research}, 25(111):1--29.

\bibitem[Psiaki, 1999]{Psiaki1999}
Psiaki, M.~L. (1999).
\newblock Square-root information filtering and fixed-interval smoothing with
  singularities.
\newblock {\em Automatica}, 35(7):1323--1331.

\bibitem[Rauch et~al., 1965]{Rauch1965}
Rauch, H.~E., Tung, F., and Striebel, C.~T. (1965).
\newblock Maximum likelihood estimates of linear dynamic systems.
\newblock {\em AIAA Journal}, 3(8):1445--1450.

\bibitem[Simon, 2010]{simon2010kalman}
Simon, D. (2010).
\newblock {Kalman} filtering with state constraints: a survey of linear and
  nonlinear algorithms.
\newblock {\em IET Control Theory \& Applications}, 4(8):1303--1318.

\bibitem[Tronarp et~al., 2022]{tronarp2022fenrir}
Tronarp, F., Bosch, N., and Hennig, P. (2022).
\newblock Fenrir: Physics-enhanced regression for initial value problems.
\newblock In {\em International Conference on Machine Learning}, pages
  21776--21794. PMLR.

\bibitem[Tse and Athans, 1970]{tse1970optimal}
Tse, E. and Athans, M. (1970).
\newblock Optimal minimal-order observer-estimators for discrete linear
  time-varying systems.
\newblock {\em IEEE Transactions on Automatic Control}, 15(4):416--426.

\end{thebibliography}

\end{document}